\newtheorem{theorem}[section]{Theorem}
\newtheorem{corollary}[section]{Corollary}
\begin{document}

\date{written 2011}

\title{New tight approximations for Fisher's exact test}

\author{Wilhelmiina H{\"a}m{\"a}l{\"a}inen\\
School of Computing\\
University of Eastern Finland\\ 
Finland\\
whamalai@cs.uef.fi}

\maketitle

\begin{abstract}
Fisher's exact test is often a preferred method to estimate the
significance of statistical dependence. However, in large data sets
the test is usually too worksome to be applied, especially in an
exhaustive search (data mining). The traditional solution is to
approximate the significance with the $\chi^2$-measure, but the
accuracy is often unacceptable. As a solution, we introduce a family
of upper bounds, which are fast to calculate and approximate Fisher's
$p$-value accurately. In addition, the new approximations are not
sensitive to the data size, distribution, or smallest expected counts
like the $\chi^2$-based approximation. According to both theoretical
and experimental analysis, the new approximations produce accurate
results for all sufficiently strong dependencies. The basic form of
the approximation can fail with weak dependencies, but the general
form of the upper bounds can be adjusted to be arbitrarily accurate.
\end{abstract}

Keywords: Fisher's exact test \and Upper bound \and Approximation \and Dependency rule

\section{Introduction}

Pattern recognition algorithms often involve testing the significance
of a statistical dependency between two binary variables $X$ and $A$,
given the observed counts $m(XA)$, $m(X\neg A)$, $m(\neg XA$, and
$m(\neg X\neg A)$. If the test is done only a couple of times, the
computation time is not crucial, but in an exhaustive search the test
may repeated thousands or even millions of times. 

For example, in data mining a classical problem is to search for the
most significant classification rules of the form $X\rightarrow A$ or
$X\rightarrow \neg A$, where $X$ is a set of binary attributes and $A$
is a binary class attribute. This problem is known to be $NP$-hard
with common significance measures like the $\chi^2$-measure
\cite{morishitasese} and no polynomial time solutions are known. Even
a more complex problem is to search for all sufficiently significant
dependency rules, where the consequence attribute is not fixed (see
e.g.\ \cite{whthesis}). In both problems the number of all tested
patterns can be exponential and therefore each rule should be tested
as fast as possible, preferrably in a constant time.

The problem is that typically the mined data sets are very large (the
number of attributes can be tens of thousands and the number of rows
millions). Still, the most significant (non-trivial) dependency rules
may be relatively infrequent and the corresponding distributions too
skewed for fast but inaccurate asymptotic tests. For accurate results,
one should test the significance of dependency $X\rightarrow A$ with
Fisher's exact test, which evaluates the exact probability of the
observed or a stronger dependency in the given data, if $X$ and $A$
were actually independent. For a positive dependency between $X$ and
$A$ the probability ($p$-value) is defined by the cumulative
hypergeometric distribution 

$$p_F(X\rightarrow A)=\sum_{i=0}^J\frac{\left(m(X) \atop m(XA)+i \right) \left(m(\neg X) \atop m(\neg X\neg A)+i \right)}{\left(n \atop m(A)\right)},$$

where $m(Z)$ is the absolute frequency of set $Z$, $n$ is the data size
(number of rows), and $J=\min\{m(X\neg A),m(\neg XA)\}$. (For a
negative dependency between $X$ and $A$ it suffices to replace $A$ by
$\neg A$.)

However, Fisher's exact test is computationally demanding, when the
data size $n$ is large. In each test one should evaluate $J+1\leq
\frac{n}{4}+1$ terms, which means that the worst case time complexity
is ${\mathcal{O}}(n)$. For example, if $m(X)=m(A)=500 000$ and
$m(XA)=300 000$, we should evaluate 200 001 terms. In addition, each
term involves binomial factors, but they can be evaluated in a
constant time, if all factorials $i!$, $i=1,\hdots,n$ have been
tabulated in the beginning. 

A common solution is to estimate the $p$-values from the
$\chi^2$-measure, when the data size is moderate or large. The
justification is that asymptotically (when $n$ approaches $\infty$)
$p_F$ can be approximated by the $\chi^2$-based $p$-values. However,
in finite data sets the approximations are often quite inaccurate. The
reason is that the $\chi^2$-measure is very sensitive to the data
distribution. If the exact hypergeometric distribution is symmetric,
the $\chi^2$-measure works well, but the more skewed the distribution
is, the more inaccurate the approximated $p$-values are
\cite{yates,agresti}. Classically \cite{fisher1925}, it is recommended
that the $\chi^2$-approximation should not be used, if any of the
expected counts (under the assumption of independence between $X$ and
$A$) $nP(X)P(A)$, $nP(X)P(\neg A)$, $nP(\neg X)P(A)$, or $nP(\neg
X)P(\neg A)$ is less than 5. However, this rule of thumb prevents only
the most extreme cases, which would lead to false discoveries. If the
problem is to search for e.g.\ the best 100 dependency rules, the
$\chi^2$-measure produces quite different results than $p_F$, even if
all expected counts and the data size are large.

In this paper, we introduce better approximations for the exact
$p_F$-values, which still can be calculated in a constant time. The
approximations are upper bounds for the $p_F$ (when the $\chi^2$-based
values are usually lower bounds, i.e.\ better than true values), but
when the dependency is sufficiently strong, they give tight
approximates to the exact values. In practice, they give identical
results with the exact $p_F$-values, when used for rule ranking.  

The idea of the approximations is to calculate only the first or a
couple of first terms from $p_F$ exactly and estimate an upper bound
for the rest terms. The simplest upper bound evaluates only the first
term exactly. It is also intuitively the most appealing as a goodness
measure, because it is reminiscent to the existing dependency measures
like the odds ratio. When the dependencies are sufficiently strong (a
typical data mining application), the results are also highly
accurate. However, if the data set contains only weak and relatively
insignificant dependencies, the simplest upper bound may produce too
inaccurate results. In this case, we can use the tighter upper bounds,
one of which can be adjusted to arbitrary accurate. However, the more
accurate $p$-values we want to get, the more terms we have to
calculate exactly. Fortunately, the largest terms of $p_F$ are always
the first ones, and in practice it is sufficient to calculate only a
small fraction (typically 2--10) of them exactly.

The rest of the paper is organized as follows. In Section
\ref{sec2} we introduce the upper bounds and give error bounds for
the approximations. In Section \ref{seceval} we evaluate the upper
bounds experimentally, concentrating on the weak (and potentially the
most problematic) dependencies. The final conclusions are drawn in
Section \ref{secconcl}.

\begin{table}
\caption{Basic notations.}
\label{notations}
\begin{center}
\small{
\begin{tabular}{ll}
$A$&a single binary attribute\\
$A\equiv(A=1)$&a short-hand notation for an event,\\
&where $A$ is true\\
$\neg A\equiv(A=0)$&a short-hand notation for an event,\\
&where $A$ is false\\
$X=\{A_1,\hdots,A_l\}$&a set of binary attributes\\
$X\equiv (A_1=1\wedge \hdots A_l=1)$&a short-hand notation for an event,\\
&where all attributes $A_i\in X$, $|X|=l$,\\
&are true\\
$\neg X\equiv \neg(A_1=1 \wedge \hdots A_l=1)$&a short-hand notation for an event,\\
&where $A_i=0$ for some $A_i\in X$\\
$n$&data size (number of rows)\\
$m(X)$&absolute frequency of $X$\\
$P(X)=\frac{m(X)}{n}$&relative frequency of $X$\\
$\delta(X,A)=P(XA)-P(X)P(A)$&leverage\\
$\gamma(X,A)=\frac{P(XA)}{P(X)P(A)}$&lift\\
\end{tabular}
}
\end{center}
\end{table}

\section{Upper bounds}
\label{sec2}

The following theorem gives two useful upper bounds, which can be used
to approximate Fisher's $p_F$. The first upper bound is more accurate,
but it contains an exponent, which makes it more difficult to
evaluate. The latter upper bound is always easy to evaluate and also
intuitively appealing. 

\begin{theorem}
\label{ubtheorem}
Let us notate $p_F=p_0+p_1+\hdots +p_J$ and $q_i=\frac{p_i}{p_{i-1}}$, $i\geq 1$.
For positive dependency rule $X\rightarrow A$ with lift $\gamma(X,A)=\frac{p(XA)}{P(X)P(A)}$
\begin{align*}
p_F&\leq p_0\left(\frac{1-q_1^{J+1}}{1-q_1}\right)\\
&\leq p_0\left(1+\frac{1-P(A)\gamma(X,A)-P(X)\gamma(X,A)+P(X)P(A)\gamma(X,A)^2}{\gamma(X,A)-1}\right).
\end{align*}
\end{theorem}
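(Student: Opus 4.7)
The plan is to first derive a closed form for the ratio $q_i=p_i/p_{i-1}$. Using the identity $\binom{n}{k+1}/\binom{n}{k}=(n-k)/(k+1)$ on both binomial factors of $p_i$, I would obtain
$$q_i=\frac{(m(X\neg A)-i+1)(m(\neg X A)-i+1)}{(m(XA)+i)(m(\neg X\neg A)+i)}.$$
The key observation is that as $i$ grows the numerator strictly decreases and the denominator strictly increases, so $q_i$ is monotonically decreasing in $i$. Hence $q_i\leq q_1$ for all $i\geq 1$, giving $p_i=p_0\prod_{j=1}^i q_j\leq p_0q_1^i$. Summing the resulting geometric series yields the first inequality
$$p_F=\sum_{i=0}^J p_i\leq p_0\sum_{i=0}^J q_1^i=p_0\,\frac{1-q_1^{J+1}}{1-q_1}.$$
Note that for a positive dependency ($\gamma>1$) one has $q_1<1$, which will drop out of the algebra of the next step.

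For the second inequality, I would first discard the nonnegative correction $q_1^{J+1}$, getting the cruder bound $p_F\leq p_0/(1-q_1)=p_0\bigl(1+q_1/(1-q_1)\bigr)$. Then, to eliminate the awkward $+1$'s in the denominator of $q_1$, I would introduce
$$\tilde q_1=\frac{m(X\neg A)\,m(\neg X A)}{m(XA)\,m(\neg X\neg A)}\geq q_1,$$
and use monotonicity of $x\mapsto x/(1-x)$ on $(0,1)$ to replace $q_1$ by $\tilde q_1$ in the bound.

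The heart of the proof is then a direct translation of $\tilde q_1$ into the variables $P(X)$, $P(A)$, $\gamma$. Substituting $P(XA)=P(X)P(A)\gamma$, $P(X\neg A)=P(X)(1-P(A)\gamma)$, $P(\neg X A)=P(A)(1-P(X)\gamma)$, and $P(\neg X\neg A)=1-P(X)-P(A)+P(X)P(A)\gamma$, the factor $P(X)P(A)$ cancels in the numerator of $\tilde q_1$, leaving
$$\tilde q_1=\frac{(1-P(A)\gamma)(1-P(X)\gamma)}{\gamma\bigl(1-P(X)-P(A)+P(X)P(A)\gamma\bigr)}.$$
I then expect a clean cancellation when forming $1-\tilde q_1$: the numerator of $\gamma(1-P(X)-P(A)+P(X)P(A)\gamma)-(1-P(A)\gamma)(1-P(X)\gamma)$ collapses to $\gamma-1$, after which the denominator factor cancels in the ratio $\tilde q_1/(1-\tilde q_1)$ and one is left with
$$\frac{\tilde q_1}{1-\tilde q_1}=\frac{(1-P(A)\gamma)(1-P(X)\gamma)}{\gamma-1}=\frac{1-P(A)\gamma-P(X)\gamma+P(X)P(A)\gamma^2}{\gamma-1},$$
which is exactly the claimed bound.

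The only real obstacle is the algebraic verification that the cross terms in $1-\tilde q_1$ collapse so cleanly to $\gamma-1$; everything else is routine manipulation of binomial ratios, monotonicity, and a geometric series.
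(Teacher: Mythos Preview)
Your argument is correct and follows the same overall strategy as the paper: derive the closed form for $q_i$, use its monotone decrease to dominate by a geometric series for the first bound, then pass to $p_0/(1-q_1)$ and rewrite in terms of $P(X),P(A),\gamma$ for the second. The one place you diverge is in the algebra of the second step. The paper keeps the exact $q_1=\dfrac{m(X\neg A)m(\neg XA)}{(m(XA)+1)(m(\neg X\neg A)+1)}$, expands $q_1/(1-q_1)$ with the $+1$'s still present, and then bounds the resulting denominator from below by $nxa(\gamma-1)$ after checking that the leftover terms $2xa\gamma+1-x-a+1/n=P(XA)+P(\neg X\neg A)+1/n$ are nonnegative. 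You instead discard the $+1$'s up front by passing to $\tilde q_1\geq q_1$ and invoking the monotonicity of $t\mapsto t/(1-t)$; this makes the subsequent algebra exact rather than an inequality, and the collapse of the numerator of $1-\tilde q_1$ to $\gamma-1$ is then immediate (your computation is correct). Both routes throw away the same information, but yours is tidier; the only thing to add explicitly is that $\tilde q_1<1$ holds because positive dependence ($\gamma>1$) is equivalent to odds ratio $>1$, which is needed for the monotonicity step and for $1/(1-\tilde q_1)$ to be meaningful.
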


\begin{proof}
Each $p_i$ can be expressed as $p_i=p_{abs}t_i$, where
$p_{abs}=\frac{m(A)!m(\neg A)!}{n!}$ is constant. Therefore, it is
enough to show the result for $p_X=\frac{p_F}{p_{abs}}$.

$p_X=t_0+t_1+...+t_J=t_0+q_1t_0+q_1q_2t_0+...+q_1q_2...q_Jt_0$, where

{\footnotesize
\begin{align*}
q_i&=\frac{t_i}{t_{i-1}}\\
&=\frac{(m(XA)+i-1)!(m(\neg X \neg A)+i-1)!(m(X \neg A)-i+1)! (m(\neg XA)-i+1)!}
{(m(XA)+i)!(m(\neg X \neg A)+i)!(m(X \neg A)-i)! (m(\neg XA)-i)!}=\\
&=\frac{(m(X \neg A)-i+1) (m(\neg XA)-i+1)}{(m(XA)+i)(m(\neg X \neg A)+i)}.
\end{align*}
}
Since $q_i$ decreases  when $i$ increases, the largest value is
$q_1$. We get an upper bound

$p_X=t_0+q_1t_0+q_1q_2t_0+...+q_1q_2...q_Jt_0 \leq t_0(1+q_1+q_1^2+q_1^3+...+q_1^J).$

The sum of geometric series is $t_0\frac{1-q_1^{J+1}}{1-q_1}$, which is 
the first upper bound. On the other hand, 
$$t_0\frac{1-q_1^{J+1}}{1-q_1}\leq\frac{t_0}{1-q_1}=t_0(1+\frac{q_1}{1-q_1}).$$
Let us insert $q_1=\frac{m(X \neg A)m(\neg X A)}{(m(XA)+1)(m(\neg X
  \neg A)+1)}$, and express the frequencies using lift
$\gamma=\gamma(X,A)$. For simplicity, we use notations $x=P(X)$ and
$a=P(A)$. Now $m(XA)=nxa\gamma$, $m(X\neg A)=nx-nxa\gamma$, $m(\neg
XA)=na-nxa\gamma$ and $m(\neg X\neg A)=n(1-x-a+xa\gamma)$. We get

\begin{align*}
\frac{q_1}{1-q_1}&=\frac{m(X\neg A)m(\neg XA)}{(m(XA)+1)(m(\neg X \neg
  A)+1)-m(X\neg A)m(\neg XA)}\\
&=\frac{n^2xa-n^2xa^2\gamma-n^2x^2a\gamma+n^2x^2a^2\gamma^2}{n^2xa\gamma+2nxa\gamma+n-nx-na+1-n^2xa}\\
&=\frac{nxa-nxa^2\gamma-nx^2a\gamma+nx^2a^2\gamma^2}{nxa\gamma+2xa\gamma+1-x-a+1/n-nxa}\\
\end{align*}

The nominator is $\geq nxa\gamma-nxa$, because
$2xa\gamma+1-x-a+1/n\geq 0 \Leftrightarrow$ $P(XA)+P(\neg X \neg
A)+1/n\geq 0$. Therefore
$$\frac{q_1}{1-q_1}\leq \frac{1-a\gamma -x\gamma+xa\gamma^2}{\gamma-1}.$$
\end{proof}

In the following, we will denote the looser (simpler) upper bound by
$ub1$ and the tighter upper bound (sum of the geometric series) by
$ub2$. In $ub1$, the first term of $p_F$ is always exact and the rest
are approximated, while in $ub2$, the first two terms are always exact
and the rest are approximated.

We note that $ub1$ can be expressed equivalently as 
$$ub1=p_0\left(\frac{P(XA)P(\neg X\neg A)}{\delta(X,A)}\right)=p_0\left(1+\frac{P(X\neg A)P(\neg XA)}{\delta(X,A)}\right),$$
where $\delta(X,A)=P(XA)-P(X)P(A)$ is the leverage. This is expression is closely related to the odds ratio
$$odds(X,A)=\frac{P(XA)P(\neg X\neg A)}{P(X\neg A)P(\neg XA)}=1+\frac{\delta}{P(X\neg A)P(\neg XA)},$$ 
which is often used to measure the strength of the dependency. The odds ratio can be expressed equivalently as
$$odds(X,A)=\frac{ub1}{ub1-1}.$$
We see that when the odds ratio increases (dependency becomes
stronger), the upper bound decreases. In practice, it gives a tight
approximation to Fisher's $p_F$, when the dependency is sufficiently
strong. The error is difficult to bind tightly, but the following
theorem gives a loose upper bound for the error, when $ub2$ is used for 
approximation.

\begin{theorem}
\label{ub2bound}
When $p_F$ is approximated by $ub2$, the error is bounded by 
$$err\leq p_0\left(\frac{q_1^2}{1-q_1}\right).$$
\end{theorem}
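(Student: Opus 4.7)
The plan is to unpack the definitions of $ub2$ and $p_F$, recognize that their first two terms coincide, and then bound the tail by a geometric series.

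First, I would recall from the proof of Theorem~\ref{ubtheorem} that $p_i = p_{\text{abs}} t_i$ with $t_i = q_1 q_2 \cdots q_i \, t_0$, and that the $q_i$ are decreasing in $i$. In particular $t_0 = p_0/p_{\text{abs}}$ and $t_1 = q_1 t_0$, so $p_1 = q_1 p_0$ exactly. Using the geometric-sum identity, I would write
\[
ub2 \;=\; p_0\,\frac{1-q_1^{J+1}}{1-q_1} \;=\; p_0\bigl(1+q_1+q_1^2+\cdots+q_1^J\bigr).
\]

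Next, I would form the difference $err = ub2 - p_F$ term by term:
\[
err \;=\; \sum_{i=0}^{J}\bigl(p_0 q_1^{\,i} - p_i\bigr).
\]
The $i=0$ term vanishes since $p_0 q_1^0 = p_0$, and the $i=1$ term vanishes since $p_0 q_1 = p_1$. This is the precise sense in which ``the first two terms are always exact'' in $ub2$. Hence
\[
err \;=\; \sum_{i=2}^{J}\bigl(p_0 q_1^{\,i} - p_i\bigr) \;\leq\; \sum_{i=2}^{J} p_0\, q_1^{\,i},
\]
where the inequality uses $p_i\geq 0$ (which is, of course, where the looseness comes from; a tighter bound would subtract the true $p_i$, but that defeats the purpose of having a closed-form estimate).

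Finally, I would factor out $p_0 q_1^2$ and apply the geometric-series bound $\sum_{i=0}^{J-2} q_1^{\,i} \leq \frac{1}{1-q_1}$ (valid since $q_1 < 1$ whenever the dependency is positive), giving
\[
err \;\leq\; p_0\, q_1^2 \sum_{i=0}^{J-2} q_1^{\,i} \;\leq\; p_0\,\frac{q_1^2}{1-q_1},
\]
which is the claimed bound. There is no real obstacle here: once one notices that the geometric approximation preserves the first two terms, the estimate is immediate. The only point worth stating explicitly is that $q_1<1$ in the positive-dependency regime (otherwise the bound is vacuous), which follows from $\gamma(X,A)>1$ via the formula for $q_1$ derived in the previous proof.
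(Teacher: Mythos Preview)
Your argument is correct and follows essentially the same route as the paper: expand $ub2$ as the finite geometric sum $p_0(1+q_1+\cdots+q_1^J)$, observe that the $i=0$ and $i=1$ terms coincide with $p_0$ and $p_1$, drop the remaining (nonnegative) $p_i$, and bound the residual tail by the full geometric series $p_0 q_1^2/(1-q_1)$. The paper treats the edge cases $J=0,1$ separately, whereas your formulation absorbs them into an empty sum; and your explicit remark that $q_1<1$ in the positive-dependency regime is a detail the paper leaves implicit here (it appears only in the subsequent corollary).
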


\begin{proof}
Upper bound $ub2$ can cause error only, if $J>1$. If $J=0$, $ub2=p_0$
and if $J=1$,
$ub2=p_0\left(\frac{1-q_1^2}{1-q_1}\right)=p_0(1+q_1)=p_0+p_1=p_F$.
Let us now assume that $J>1$. The error is
$err=ub2-p_F=p_0(1+q_1+\hdots+q_1^J-1-q_1-q_1q_2-\hdots -q_1q_2\hdots q_J)=p_0(q_1^2+q_1^3+\hdots q_1^J-q_1q_2-\hdots-q_1q_2\hdots q_J)$.
It has an upper bound
$$err<p_0q_1^2(1+q_1+\hdots +q_1^{J-2})=p_0q_1^2\left(\frac{1-q_1^{J-1}}{1-q_1}\right)\leq p_0\left(\frac{q_1^2}{1-q_1}\right).$$
\end{proof}

This leads to the following corollary, which gives good guarantees for
the safe use of $ub2$:

\begin{corollary}
If $\gamma(X,A)\geq \frac{1+\sqrt{5}}{2}\approx 1.62$, then $err=ub2-p_F\leq p_0$.
\end{corollary}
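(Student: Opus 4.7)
The plan is to chain Theorem \ref{ub2bound} with the numerator estimate from the proof of Theorem \ref{ubtheorem}. By Theorem \ref{ub2bound}, $err\leq p_0\,q_1^2/(1-q_1)$, so it suffices to show $q_1^2/(1-q_1)\leq 1$. Writing $\phi=(1+\sqrt{5})/2$, the quadratic $q^2+q-1\leq 0$ has positive root $1/\phi=(\sqrt{5}-1)/2$, so (since $q_1\in[0,1)$) the target is equivalent to $q_1\leq 1/\phi$. Using the defining identity $\phi^2=\phi+1$, one checks that $q_1\leq \phi/(1+\phi)=1/\phi$ is in turn equivalent to
$$\frac{q_1}{1-q_1}\leq \phi.$$
So the whole claim reduces to this last inequality.

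To prove it, I would reuse the bound
$$\frac{q_1}{1-q_1}\leq \frac{1-x\gamma-a\gamma+xa\gamma^2}{\gamma-1}$$
that already sits inside the proof of Theorem \ref{ubtheorem} (with $x=P(X)$, $a=P(A)$, $\gamma=\gamma(X,A)$). The crucial observation is the factorization $1-x\gamma-a\gamma+xa\gamma^2=(1-x\gamma)(1-a\gamma)$, together with the fact that each factor lies in $[0,1]$: for a positive dependency, $x\gamma=P(XA)/P(A)\leq 1$ and $a\gamma=P(XA)/P(X)\leq 1$. Hence the numerator is at most $1$ and
$$\frac{q_1}{1-q_1}\leq \frac{1}{\gamma-1}.$$

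The closing step is the golden-ratio identity $\phi-1=1/\phi$: the hypothesis $\gamma\geq\phi$ gives $\gamma-1\geq 1/\phi$, so $1/(\gamma-1)\leq\phi$, which finishes the chain. Plugging back into Theorem \ref{ub2bound} yields $err\leq p_0$.

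The main obstacle will simply be spotting the factorization $1-x\gamma-a\gamma+xa\gamma^2=(1-x\gamma)(1-a\gamma)$ and the accompanying trivial bound $x\gamma,a\gamma\leq 1$; without it, comparing the Theorem \ref{ubtheorem} numerator with $\gamma-1$ looks awkward. Once that is in hand, everything else is bookkeeping with $\phi^2=\phi+1$, and no new analytic ideas are needed beyond what already appears in Theorems \ref{ubtheorem} and \ref{ub2bound}.
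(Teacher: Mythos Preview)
Your argument is correct. Both you and the paper begin with the reduction from Theorem~\ref{ub2bound} to the condition $q_1\leq(\sqrt{5}-1)/2=1/\phi$, and both ultimately establish the (equivalent) key inequality $q_1\leq 1/\gamma$. The routes to that inequality differ, however. The paper bounds $q_1$ directly by the reciprocal odds ratio $\frac{P(X\neg A)P(\neg XA)}{P(XA)P(\neg X\neg A)}$ and then compares $(1-x\gamma)(1-a\gamma)$ with $1-x-a+xa\gamma$ to obtain $q_1<1/\gamma$. You instead work with $q_1/(1-q_1)$, recycle the bound already proved inside Theorem~\ref{ubtheorem}, and use the clean factorization $(1-x\gamma)(1-a\gamma)\leq 1$ to get $q_1/(1-q_1)\leq 1/(\gamma-1)$, which is the same bound in disguise. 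Your approach is a little more economical since it reuses existing machinery and needs only the trivial estimate on the factored numerator; the paper's direct estimate on $q_1$ has the small advantage of isolating the standalone inequality $q_1<1/\gamma$ without passing through $q_1/(1-q_1)$.
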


\begin{proof}
According to Theorem \ref{ub2bound}, $err\leq p_0$, if 
$q_1^2\leq 1-q_1$. This is true, when $q_1\leq \frac{\sqrt{5}-1}{2}$.

On the other hand, $q_1<\frac{1}{\gamma}$, when $\gamma>1$, because
\begin{multline*}
q_1=\frac{m(X\neg A)m(\neg XA)}{(m(XA)+1)(m(\neg X\neg A)+1)}<\frac{P(X\neg A)P(\neg XA)}{P(XA)P(\neg X\neg A)}=\frac{xa(1-a\gamma)(1-x\gamma)}{xa\gamma(1-x-a+xa\gamma)}.
\end{multline*}

This is $\leq \frac{1}{\gamma}$, because $1-a\gamma-x\gamma+xa\gamma^2\leq 1-x-a+xa\gamma \Leftrightarrow 0\leq x(\gamma-1)+a(\gamma-1)-xa\gamma(\gamma-1)=(x+a-P(XA))(\gamma-1)$.

A sufficient condition for $q_1\leq \frac{\sqrt{5}-1}{2}$ is that 
$$\frac{1}{\gamma}\leq \frac{\sqrt{5}-1}{2} \Leftrightarrow \gamma\geq \frac{2}{\sqrt{5}-1}=\frac{\sqrt{5}+1}{2}.$$

\end{proof}

This result also means that $ub2\leq 2p_F$, when the lift is as large
as required.

The simpler upper bound, $ub1$, can cause a somewhat larger error than
$ub2$, but it is even harder to analyze. However, we note that
$ub1=p_F$ only, when $J=0$. When $J=1$, there is already some error,
but in practice the difference is marginal. The following theorem
gives guarantees for the accuracy of $ub1$, when $\gamma\geq 2$.

\begin{theorem}
If $p_F$ is approximated with $ub1$ and $\gamma(X,A)\geq 2$, the error is 
bounded by $err\leq p_0$.
\end{theorem}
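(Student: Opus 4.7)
The plan is to reduce the claim to a short algebraic inequality in $P(X)$, $P(A)$ and $\gamma$ by using the trivial lower bound $p_F \geq p_0$. Since every $p_i$ is non-negative, we have $p_F\geq p_0$, so
$$err \;=\; ub1-p_F \;\leq\; ub1-p_0 \;=\; p_0\cdot\frac{1-P(A)\gamma-P(X)\gamma+P(X)P(A)\gamma^2}{\gamma-1}.$$
It therefore suffices to show that the fraction on the right is at most $1$ whenever $\gamma\geq 2$.

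Writing $x=P(X)$ and $a=P(A)$ and clearing the positive denominator (note $\gamma-1>0$ since $\gamma\geq 2$), this reduces to the inequality
$$\gamma\bigl(1+x+a-xa\gamma\bigr)\;\geq\;2.$$
The key observation is that $xa\gamma=P(XA)$, so the parenthesised quantity equals $1+P(X)+P(A)-P(XA)$. Since $P(XA)\leq P(X)$, this is at least $1+P(A)\geq 1$, and combining with $\gamma\geq 2$ closes the argument at once.

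I do not expect a genuine obstacle here: the only non-obvious step is the regrouping of the messy expression $1-a\gamma-x\gamma+xa\gamma^2$ into $\gamma(1+P(X)+P(A)-P(XA))-2$, after which both factors are bounded from below by the hypothesis $\gamma\geq 2$ and the trivial inclusion $\{XA\}\subseteq\{X\}$. Note that this plan does not need Theorem~\ref{ub2bound} or any control on $J$; the looseness of $ub1$ relative to $ub2$ is absorbed entirely by the slack in $p_F\geq p_0$, which is why one can afford a weaker hypothesis on $q_1$ by imposing the stronger threshold $\gamma\geq 2$ (as opposed to $\gamma\geq\tfrac{1+\sqrt{5}}{2}$ in the previous corollary).
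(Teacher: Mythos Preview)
Your argument is correct and considerably shorter than the paper's. The paper decomposes $err=(ub1-ub2)+(ub2-p_F)$, invokes Theorem~\ref{ub2bound} for the second piece, bounds $ub1$ above by substituting $\gamma=2$ (using that $ub1$ is decreasing in $\gamma$), and then manipulates the resulting expression in $q_1$ until it collapses to $p_0(1-2a-2x+4xa)\leq p_0$. You bypass all of this by replacing $p_F$ with its smallest term $p_0$ and reducing everything to the single inequality $\gamma\bigl(1+P(X)+P(A)-P(XA)\bigr)\geq 2$, which follows immediately from $\gamma\geq 2$ and $P(XA)\leq P(X)$. The paper's route does yield the marginally sharper intermediate bound $err\leq p_0(1-2a-2x+4xa)$, but since the theorem only claims $err\leq p_0$ that refinement is never used; your proof is the cleaner way to establish the stated result.
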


\begin{proof}
The error is $err=ub1-p_F=ub1-ub2+ub2-p_F$, where $ub2-p_F\leq
p_0\left(\frac{q_1^2}{1-q_1}\right)$ by Theorem \ref{ub2bound}. 

When $\gamma\geq 2$, $ub1$ (being a decreasing function of $\gamma$) is
$$ub1=p_0\left(\frac{\gamma(1-a\gamma-x\gamma+xa\gamma^2)}{\gamma-1}\right)\leq
p_02(1-a-x+2xa).$$ Therefore, the error is bounded by

\begin{align*}
err&\leq p_0\left(2(1-a-x+2xa)-\frac{1-q_1^{J+1}}{1-q_1}+\frac{q_1^2}{1-q}\right)\\
&=p_0\left(2-2a-2x+4xa -\frac{(1-q_1^2-q_1^{J+1})}{(1-q_1)}\right)\\
&=p_0\left(\frac{2(1-q_1)-2a(1-q_1)-2x(1-q_1)+4xa(1-q_1)-1+q_1^2+q_1^{J+1}}{1-q_1}\right)\\
&=p_0\left(\frac{1-2q_1+q_1^2+q_1^{J+1}2a(1-q_1)-2x(1-q_1)+4xa(1-q_1)}{1-q_1}\right).\\
\end{align*}

When $\gamma\geq 2$, $q_1\leq \frac{1}{2}$, and thus 
\begin{align*}
1-2q_1+q_1^2+q_1^{J+1}&\leq 1-q_1 \Leftrightarrow\\ 
0&\leq q_1-q_1^2-q_1^{J+1}\Leftrightarrow\\ 
0&\leq q_1(1-q_1-q_1^J).\\
\end{align*}
Therefore,
$$err\leq p_0\left(\frac{(1-q_1)(1-2a-2x+4xa)}{1-q_1}\right)=p_0(1-2a-2x+4xa).$$

The latter factor is always $\leq 1$, because
$2a+2x-4xa=2a(1-x)+2x(1-a)\geq 0$. Therefore $err\leq p_0$.
\end{proof}

Our experimental results support the theoretical analysis, according to
which both upper bounds, $ub1$ and $ub2$, give tight approximations to
Fisher's $p_F$, when the dependency is sufficiently strong. However,
if the dependency is weak, we may need a more accurate
approximation. A simple solution is to include more larger terms
$p_0+p_1+\hdots p_{l-1}$ to the approximation and estimate an upper
bound only for the smallest terms $p_{l}+\hdots p_J$ using the sum of
the geometric series. The resulting approximation and the
corresponding error bound are given in the following theorem. We omit
the proofs, because they are essentially identical with the previous
proofs for Theorems \ref{ubtheorem} and \ref{ub2bound}.

\begin{theorem}
\label{generalub}
For positive dependency rule $X\rightarrow A$ holds
$$p_F\leq p_0+\hdots p_{l-1}+p_{l}\left(\frac{1-q_{l+1}^{J-l+1}}{1-q_{l+1}}\right),$$
where
$$q_{l+1}=\frac{(m(X\neg A)-l)(m(\neg XA)-l)}{(m(XA)+l+1)(m(\neg X\neg A)+l+1)}$$ and $l+1\leq J$.

The error of the approximation is 
$$err\leq p_0\left(\frac{q_{l+1}^2}{1-q_{l+1}}\right).$$
\end{theorem}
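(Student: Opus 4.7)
The plan is to replay the arguments of Theorems~\ref{ubtheorem} and~\ref{ub2bound} verbatim, the only difference being that the geometric series is split off at index $l$ rather than at $0$ or $1$. The single structural fact I need is that the ratios $q_i$ are a decreasing function of $i$, which was already established (by inspection of the closed form) in the proof of Theorem~\ref{ubtheorem}; everything else is bookkeeping.

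For the upper bound, first I would write $p_F = (p_0 + \cdots + p_{l-1}) + (p_l + p_{l+1} + \cdots + p_J)$, keep the first $l$ summands exact, and factor $p_l$ out of the tail, giving
$$p_l + p_{l+1} + \cdots + p_J = p_l\bigl(1 + q_{l+1} + q_{l+1}q_{l+2} + \cdots + q_{l+1}\cdots q_J\bigr).$$
Because $q_{l+k} \leq q_{l+1}$ for every $k \geq 1$, each product $q_{l+1}\cdots q_{l+k}$ is dominated by $q_{l+1}^{k}$, and summing the resulting geometric series of length $J-l+1$ gives the stated bound $p_l(1 - q_{l+1}^{J-l+1})/(1 - q_{l+1})$.

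For the error bound, I would copy the computation from the proof of Theorem~\ref{ub2bound}. The $k=0$ and $k=1$ summands of the geometric approximation to the tail, namely $p_l$ and $p_l q_{l+1} = p_{l+1}$, are already exact, so the error is
$$err = p_l\sum_{k=2}^{J-l}\bigl(q_{l+1}^{k} - q_{l+1}q_{l+2}\cdots q_{l+k}\bigr) \leq p_l q_{l+1}^{2}\bigl(1 + q_{l+1} + \cdots + q_{l+1}^{J-l-2}\bigr) \leq p_l\,\frac{q_{l+1}^2}{1-q_{l+1}}.$$
Since all $q_i < 1$ in the positive-dependency regime, the $p_i$ are decreasing, hence $p_l \leq p_0$, which loosens this into the form stated in the theorem.

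There is no genuine obstacle here: both halves reduce to the earlier lemmas once the splitting point is shifted. The one place that needs care is the indexing---that the geometric sum has exactly $J-l+1$ terms, and that the error expansion starts at exponent $2$ because the first two geometric terms happen to reproduce $p_l$ and $p_{l+1}$ exactly---but this is a counting matter rather than a new estimate.
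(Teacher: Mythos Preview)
Your proposal is correct and matches the paper's approach exactly: the paper explicitly omits the proof, stating that it is ``essentially identical with the previous proofs for Theorems~\ref{ubtheorem} and~\ref{ub2bound},'' and you have carried out precisely that shifted replay. The one point you add---loosening $p_l$ to $p_0$ via monotonicity of the $p_i$ to match the stated error bound---is the right bridge and the paper leaves it implicit.
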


\section{Evaluation}
\label{seceval}

Figure \ref{examplefig} shows the typical behaviour of the new upper
bounds, when the strength of the dependency increases (i.e.\ $m(XA)$
increases and $m(X)$ and $m(A)$ remain unchanged). In addition to upper
bounds $ub1$ and $ub2$, we consider a third upper bound, $ub3$, based
on Theorem \ref{generalub}, where the first three terms of $p_F$ are
evaluated exactly and the rest is approximated. All three upper bounds
approach to each other and the exact $p_F$-value, when the dependency
becomes stronger.

\begin{figure}
\begin{center}
\includegraphics[width=0.7\textwidth]{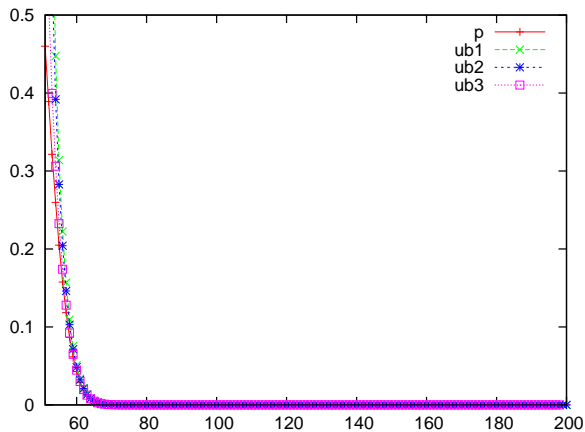}
\caption{Exact $p_F$ and three upper bounds as functions of $m(XA)$, when $m(X)=200$, $m(A)=250$, and $n=1000$. The strength of the dependency increases on the $x$-axes.}
\label{examplefig}
\end{center}
\end{figure}

Figure \ref{zoomfig} shows a magnified area from Figure
\ref{examplefig}. In this area, the dependencies are weak, the upper
bounds diverge from the exact $p_F$. The reason is that in this area
the number of approximated terms is also the largest. For example,
when $m(XA)=55$, $p_F$ contains 146 terms, and when $m(XA)=65$, it
contains $136$ terms. In these points the lift is $1.1$ and
$1.3$, respectively. The difference between $ub1$ and $ub2$ is 
marginal, but $ub3$ clearly improves $ub1$.

\begin{figure}
\begin{center}
\includegraphics[width=0.7\textwidth]{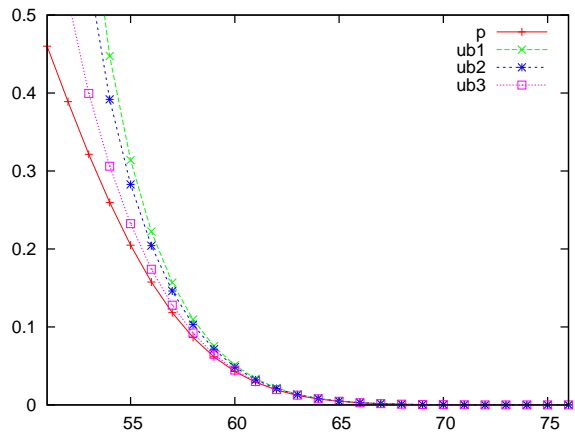}
\caption{A magnified area from Figure \ref{examplefig} showing the differences, when the dependency is weak.}
\label{zoomfig}
\end{center}
\end{figure}

Because the new upper bound gives accurate approximations for strong
dependencies, we evaluate the approximations only for the potentially 
problematic weak dependencies. As an example, we consider two data
sets, where the data size is either $n=1000$ or $n=10000$. For both
data sets, we have three test cases: 1) when $P(X)=P(A)=0.5$, 2) when
$P(X)=0.2$ and $P(A)=0.25$, and 3) when $P(X)=0.05$ and
$P(A)=0.2$. (The second case with $n=1000$ is shown in Figures
\ref{examplefig} and \ref{zoomfig}.)  For all test cases we have
calculated the exact $p_F$, three versions of the upper bound $ub1$,
$ub2$ and $ub3$, and the $p$-value achieved from the one-sided
$\chi^2$-measure. The $\chi^2$-based $p$-values were calculated with
an online Chi-Square Calculator \cite{calculator2}. The
values are reported for the cases, where $p_F\approx 0.05$,
$p_F\approx 0.01$, and $p_F\approx 0.001$. Because the data is
discrete, the exact $p_F$-values always deviate somewhat from the
reference values.

The results for the first data set ($n=1000$) are given in Table
\ref{pFapprcomp1000} and for the second data set ($n=10000$) in Table
\ref{pFapprcomp10000}. As expected, the $\chi^2$-approximation works
best, when the data size is large and the distribution is balanced
(case 1). According to the classical rule of a thumb, the
$\chi^2$-approximation can be used, when all expected counts are $\geq
5$ \cite{fisher1925}. This requirement is not satisfied in the
third case in the smaller data set. The resulting $\chi^2$-based
$p$-values are also the least accurate, but the $\chi^2$-test produced
inaccurate approximations also for the case 2, even if the smallest
expected frequency was 50. 

In the smaller data set, the $\chi^2$-approximation overperformed the
new upper bounds only in the first case, when $p_F\approx 0.05$. If we
had calculated the first four terms exactly, the resulting $ub4$ would
have already produced a better approximation. 

In the larger data set, $\chi^2$ gave more accurate results for the
first two cases, when $p_F\approx 0.05$ and for the case 1, when
$p_F\approx 0.01$. When $p_F\approx 0.001$, the new upper bounds gave
always more accurate approximations. If we had calculated the first
eight terms exactly, the resulting $ub8$ would have overperformed the
$\chi^2$-approximation in case 1 with $p_F\approx 0.01$ and case 2
with $p_F\approx 0.05$. Calculating eight exact terms is quite
reasonable compared to all 2442 terms, which have to be calculated for
the exact $p_F$ in case 1. With 15 exact terms, the approximation for
the case 1 with $p_F \approx 0.05$ would have also been more accurate
than the $\chi^2$-based approximation. However, in so large data set
(especially with an exhaustive search), a $p$-value of 0.05 (or even
0.01) is hardly significant. Therefore, we can conclude that for
practical search purposes the new upper bounds give better
approximations to the exact $p_F$ than the $\chi^2$.

\begin{table}
\caption{Comparison of the exact $p_F$ value, three upper bounds, and the 
$p$-value based on the $\chi^2$-measure, when $n=1000$. 
The best approximations are emphasized. $E(m(XA))$ is the expected 
frequency of $XA$ under the independence assumption.}
\label{pFapprcomp1000}
\begin{center}
\small{
\begin{tabular}{|l|l|l|l|l|l|}
\hline
\multicolumn{6}{|l|}{1) $n=1000$, $m(X)=500$, $m(A)=500$, $E(m(XA))=250$}\\
$m(XA)$&$p_F$&$ub1$&$ub2$&$ub3$&$p(\chi2)$\\
\hline
263&0.0569&0.0696&0.0674&0.0617&{\bf 0.050}\\
269&0.0096&0.0107&0.0105&{\bf 0.0100}&0.0081\\
275&0.00096&0.00103&0.00101&{\bf 0.00100}&0.00080\\
\hline
\multicolumn{6}{|l|}{2) $n=1000$, $m(X)=200$, $m(A)=250$, $E(m(XA))=50$}\\
$m(XA)$&$p_F$&$ub1$&$ub2$&$ub3$&$p(\chi2)$\\
\hline
60&0.0429&0.0508&0.0484&{\bf 0.0447}&0.0340\\
63&0.0123&0.0137&0.0132&{\bf 0.0125}&0.088\\
68&0.00089&0.00094&0.00092&{\bf 0.00089}&0.00050\\
\hline
\multicolumn{6}{|l|}{3) $n=1000$, $m(X)=50$, $m(A)=200$, $E(m(XA))=1$}\\
$m(XA)$&$p_F$&$ub1$&$ub2$&$ub3$&$p(\chi2)$\\
\hline
15&0.0559&0.0655&0.0605&{\bf 0.0565}&0.0349\\
17&0.0123&0.0135&0.0128&{\bf 0.0124}&0.0056\\
19&0.00194&0.00205&0.00198&{\bf 0.00194}&0.00050\\
\hline
\end{tabular}
}
\end{center}
\end{table}

\begin{table}
\caption{Comparison of the exact $p_F$ value, three upper bounds, and the 
$p$-value based on the $\chi^2$-measure, when $n=10 000$. 
The best approximations are emphasized. $E(m(XA))$ is the expected frequency 
of $XA$ under the independence assumption.}
\label{pFapprcomp10000}
\begin{center}
\small{
\begin{tabular}{|l|l|l|l|l|l|}
\hline
\multicolumn{6}{|l|}{1) $n=10000$, $m(X)=5000$, $m(A)=5000$, $E(m(XA))=2500$}\\
$m(XA)$&$p_F$&$ub1$&$ub2$&$ub3$&$p(\chi2)$\\
\hline
2541&0.0526&0.0655&0.0647&0.0621&{\bf 0.0505}\\
2559&0.0096&0.0109&0.0109&0.0106&{\bf 0.0091}\\
2578&0.00097&0.00105&0.00104&{\bf 0.00102}&0.00090\\
\hline
\multicolumn{6}{|l|}{2) $n=10000$, $m(X)=2000$, $m(A)=2500$, $E(m(XA))=500$}\\
$m(XA)$&$p_F$&$ub1$&$ub2$&$ub3$&$p(\chi2)$\\
\hline
529&0.0504&0.0623&0.0611&0.0579&{\bf 0.047}\\
541&0.0100&0.0113&0.0112&{\bf 0.0108}&0.0090\\
554&0.00109&0.00118&0.00116&{\bf 0.00114}&0.00090\\
\hline
\multicolumn{6}{|l|}{3) $n=10000$, $m(X)=500$, $m(A)=2000$, $E(m(XA))=10$}\\
$m(XA)$&$p_F$&$ub1$&$ub2$&$ub3$&$p(\chi2)$\\
\hline
115&0.0498&0.0608&0.0583&{\bf 0.0541}&0.0427\\
121&0.0105&0.0118&0.0115&{\bf 0.0109}&0.0080\\
128&0.00106&0.00114&0.00112&{\bf 0.00108}&0.00070\\
\hline
\end{tabular}
}
\end{center}
\end{table}

\section{Conclusions}
\label{secconcl}

We have introduced a family of upper bounds, which can be used to
estimate Fisher's $p_F$ accurately. Unlike the $\chi^2$-based
approximations, these upper bounds are not sensitive to the data size,
distribution, or small expected counts. In practical data mining
purposes, the simplest upper bound produces already accurate results,
but if all existing dependencies are weak and relatively
insignificant, the results can be too inaccurate. In this case, we can
use a general upper bound, whose accuracy can be adjusted freely. In
practice, it is usually sufficient to calculate only a couple of terms
from the Fisher's $p_F$ exactly. In large data sets, this is an
important concern, because the exact $p_F$ can easily require
evaluation of thousands of terms for each tested dependency. 

\bibliographystyle{alpha}

\end{document}